\newcommand{\ket}[1]{\big| #1 \big\rangle}
\newcommand{\bra}[1]{\big\langle #1 \big|}
\newcommand{\braket}[2]{\big\langle #1 \big| #2 \big\rangle}             
\newtheorem{theorem}{Theorem}[section]
\newtheorem{lemma}[theorem]{Lemma}
\newtheorem{definition}[theorem]{Definition}
\newtheorem{remark}{Theorem}[section]
\newenvironment{proof}[1][Proof]{\begin{trivlist}
\item[\hskip \labelsep {\bfseries #1}]}{\end{trivlist}}
\newcommand{\qed}{\nobreak \ifvmode \relax \else
      \ifdim\lastskip<1.5em \hskip-\lastskip
      \hskip1.5em plus0em minus0.5em \fi \nobreak
      \vrule height0.75em width0.5em depth0.25em\fi}
\date{}
\begin{document}

\title{Coined Quantum Walks as Quantum Markov Chains}
\author{Renato Portugal$^1$\footnote{\tt portugal@lncc.br} \, and Etsuo Segawa$^2$\footnote{\tt e-segawa@m.tohoku.ac.jp}\\
{\small National Laboratory of Scientific Computing - LNCC} \\
{\small Av. Get\'{u}lio Vargas 333, Petr\'{o}polis, RJ, 25651-075, Brazil}\\
{\small $^2$Graduate school of Information Sciences, Tohoku University}\\
{\small Aoba, Sendai, 980-8579, Japan}
}

\maketitle

\begin{abstract}
We analyze the equivalence between discrete-time coined quantum walks and Szegedy's quantum walks. We characterize a class of flip-flop coined models with generalized Grover coin on a graph $\Gamma$ that can be directly converted into Szegedy's model on the subdivision graph of $\Gamma$ and we describe a method to convert one model into the other. This method improves previous results in literature that need to use the staggered model and the concept of line graph, which are avoided here. 
\end{abstract}

\section{Introduction}\label{sec:0}

Coined quantum walks (QWs) on graphs were defined in Ref.~\cite{Aharonov:2000} and have been extensively analyzed in literature~\cite{Ven12,Kon08,Kendon:2007,Portugal:Book,Manouchehri2014}. The coined model has an internal space, which determines the direction that a particle would take. The model's Hilbert space in this case is the tensor product of the internal space and the space associated with the graph.

Szegedy's model~\cite{Szegedy:2004}, on the other hand, does not have an internal state. This model provides a recipe to generate coinless discrete-time QWs on bipartite graphs using a Hilbert space associated with the graph only. Szegedy's model was used for the spatial search problem, that is, for finding the location of a marked vertex in a graph~\cite{Magniez:2011,Magniez:2012,Krovi:2010}, and for searching triangles~\cite{mss07}. 

The detailed connection between the coined and Szegedy's model has remained elusive for many years until 
Ref.~\cite{Por16} used the staggered QW model~\cite{PSFG15} as a bridge to describe under which conditions Szegedy's and coined QWs are equivalent. The method described in Ref.~\cite{Por16} employs the line graph of the graph on which the Szegedy's model is defined. In this work we describe a simpler method of obtaining the equivalence between Szegedy's and coined QWs using neither the staggered QW model nor the concept of line graphs. We characterize a class of flip-flop coined QWs on graph $\Gamma$ with generalized Grover coin that is equivalent to Szegedy's QWs on the subdivision graph of $\Gamma$ and describe how those coined QWs can be converted into Szegedy's model.

The structure of this paper is as follows. In Sec.~\ref{sec:MD}, we give the definition of the main concepts used in this work. In Sec.~\ref{sec:1}, we present our main result which is the method to convert coined QWs into Szegedy's QWs. In Sec.~\ref{sec_conlusions}, we draw our conclusions.

\section{Main Definitions}\label{sec:MD}

Let $\Gamma(V,E)$ be a multigraph with vertex set $V=V(\Gamma)$ and edge set $E=E(\Gamma)$ with cardinalities $|V|$ and $|E|$, respectively. 
%
We set ${\cal H}^{2|E|}$ as the Hilbert space whose computational basis is $\big\{\ket{v,j}: v\in V, \,0\le j<d_v\big\}$. 
We take a decomposition of ${\cal H}^{2|E|}$ by 
\begin{equation}\label{decom} 
        {\cal H}^{2|E|}=\bigoplus_{v\in V} \mathrm{span}\{ |v,j\rangle : j=0,1,\dots,d_v-1 \}, 
        \end{equation}
where $d_v$ is the degree of vertex $v$. 
\begin{definition}\label{def:nonregularQW}
The \textbf{flip-flop coined QW} on a multigraph $\Gamma(V,E)$ associated with Hilbert space ${\cal H}^{2\,|E|}$ is driven by a unitary operator the form of which is
\begin{equation}
	U \,=\, S\,C',
\end{equation}
where $C'$ is a direct sum of $|V|$ matrices under the decomposition of (\ref{decom}) with dimensions $d_1$, ..., $d_{|V|}$, 
and $S$ is the shift operator which permutes the vectors of the computational basis of ${\cal H}^{2\,|E|}$, 
\begin{equation}\label{def_S2}
	S \ket{v,j} = \ket{v',j'}, \,\forall v\in V, \,0\le j<d_v,
\end{equation}
where vertices $v$ and ${v'}$ are adjacent, label $j$ points from $v$ to ${v'}$, and label $j'$ points from ${v'}$ to $v$ and $j$, $j'$ lie on the same edge. 
\end{definition}
%
Notice that $\ket{v,j}$ is a notation for the basis vectors that cannot be written as $\ket{v}\otimes\ket{j}$ unless the multigraph is regular. 
We take the order of the basis vectors as $\ket{v_1,0}$, ...,  $\ket{v_1,d_1-1}$, $\ket{v_2,0}$, ..., $\ket{v_2,d_2-1}$, etc., so that $C'$ will have a block diagonal form with $|V|$ matrices.

We set $A=A(\Gamma)$ as the set of symmetric arcs induced by $E(\Gamma)$, that is, $E(\Gamma)=\{ \{a,\bar{a}\}: a\in A \}$. 
Here we denote the origin and terminal vertices of $a\in A(\Gamma)$ by $o(a),t(a)\in V$, and the inverse arc of $a$ by $\bar{a}$. 
\noindent \\
\begin{remark}\label{rem1}
It holds that $|A|=|\bigcup_{v\in V}\{(v,j): j\in \{1,\dots,d_v\} \}|$ since 
a one-to-one correspondence between them is $(v,j)\leftrightarrow a$, where $o(a)=v$ with $e:=\{a,\bar{a}\}\in E$, and $j$ lies on $e$. 
We define bijection $\xi:A\to \bigcup_{v\in V}\{(v,j): j\in \{1,\dots,d_v\} \}$ by
	\begin{equation}\label{xi}
        \xi(a)=(o(a), j), \mathrm{\;where\;} j \mathrm{\;lies\; on\;} \{a,\bar{a}\}.  
        \end{equation} 
\end{remark}
Let $\mathcal{H}_A$ be the Hilbert space whose computational basis is $\{|a\rangle: a\in A\}$. 
We define the unitary map $\mathcal{V}_\xi: \mathcal{H}_A \to \mathcal{H}^{2|E|}$ by 
	\[ \mathcal{V}_\xi|a\rangle=| o(a),j \rangle, \]
where $j$ lies on $\{a,\bar{a}\}$. 
The shift operator is expressed by $\mathcal{V}_\xi^{-1} S\mathcal{V}_\xi |a\rangle=|\bar{a}\rangle$. 
For $v\in V$, define the subspace $\mathcal{H}_v\subset \mathcal{H}_A$ by $\mathcal{H}_v=\mathrm{span}\{|a\rangle: o(a)=v\}$ which is isomorphic to 
$\mathrm{span}\{|v,j\rangle: j=0,\dots, d_v-1\}\subset \mathcal{H}^{2|E|}$. 
Thus if $C'=\sum_{v\in V} \oplus C_v$, then the coin operator is expressed by 
	\[ \mathcal{V}_\xi^{-1} C'\mathcal{V}_\xi|a\rangle=\sum_{b\in \{b\in A:o(a)=o(b)\}} (C_{o(a)})_{\xi^{-1}(b),\xi^{-1}(a)} |b\rangle. \] 
\noindent \\
\quad Let us define the QW model known as Szegedy's model~\cite{Szegedy:2004}. Consider a connected bipartite graph $\Gamma(X,Y,E')$, where $X,Y$ are disjoint sets of vertices and $E'$ is the set of non-directed edges. Let 
\begin{equation}\label{biadmatrix}
		\left(\begin{array}[]{cc}
		  0 & M \\
		M^T & 0
	\end{array}\right)
\end{equation}
be the biadjacency matrix of $\Gamma(X,Y,E')$, that is, $(M)_{x,y}=1$ if $\{x,y\}\in E$, $(M)_{x,y}=0$ otherwise. Here $T$ is the transpose operator. 
Using $M$, define $P$ as a probabilistic map from $X$ to $Y$ with entries $p_{xy}\ge 0$, that is, $(M)_{x,y}=0\Rightarrow p_{xy}=0$. 
Using $M^T$, define $Q$  as a probabilistic map from $Y$ to $X$ with entries $q_{yx}\ge 0$, that is, $(M^T)_{y,x}=0\Rightarrow q_{yx}=0$. 
If $P$ is an $m\times n$ matrix, $Q$ will be an $n\times m$ matrix. Both are right-stochastic, that is, each row sums to $1$. 
Let $\mathcal{H}^{mn}=\mathcal{H}^{m}\otimes \mathcal{H}^{n}$ be the Hilbert space whose canonical basis is $\{|x,y\rangle:=|x\rangle \otimes |y\rangle; x\in X,\;y\in Y\}$. 
Using $P$ and $Q$, it is possible to define unit vectors on $\mathcal{H}^{mn}$ for given $x\in X$ and $y\in Y$, 
\begin{eqnarray}
  \ket{\phi_x} &=&  \sum_{y\in Y} \sqrt{p_{x y}}\,\textrm{e}^{i\theta_{xy}} \, \ket{x,y}, \label{ht_phi_x} \\
  \ket{\psi_y}  &=&  \sum_{x\in X} \sqrt{q_{y x}}\,\textrm{e}^{i\theta'_{xy}} \, \ket{x,y}, \label{ht_psi_y}
\end{eqnarray}
that have the following properties: $\braket{\phi_x}{\phi_{x'}}=\delta_{xx'}$ and $\braket{\psi_y}{\psi_{y'}}=\delta_{yy'}$. 
Here $\theta_{xy},\theta'_{xy}\in \mathbb{R}$. 
In Szegedy's original definition, $\theta_{xy}=\theta'_{xy}=0$. We call \textbf{extended Szegedy's QW} the version that allows nonzero angles.\footnote{Another extended version of  Szegedy's QW model can be seen in~\cite{SS}.}

\begin{definition}\label{def:SzegedyQW}
\textbf{Szegedy's QW} on a bipartite graph $\Gamma(X,Y,E)$ with biadjacent matrix (\ref{biadmatrix}) is defined on a Hilbert space ${\cal H}^{m n} = {\cal H}^{m}\otimes {\cal H}^{n} $, where $ m = | X |$ and $n = | Y | $, the computational basis of which is $ \big \{\ket {x, y}: x \in X, y \in Y \big \} $.
The QW is driven by the unitary operator
\begin{equation}\label{ht_U_ev}
    W \,=\, R_1 \, R_0,
\end{equation}
where
\begin{eqnarray}
  R_0 &=& 2\sum_{x\in X} \ket{\phi_x}\bra{\phi_x} - I, \label{ht_RA}\\
  R_1 &=& 2\sum_{y\in Y} \ket{\psi_y}\bra{\psi_y} - I. \label{ht_RB}
\end{eqnarray}
\end{definition}
Notice that operators $R_0$ and $R_1$ are unitary and Hermitian ($R_0^2=R_1^2=I$).

Let us define the notion of generalized Grover operator.
\begin{definition}\label{def:orthrefl}
Let $\mathcal{H}^{d_v}\subset \mathcal{H}^{2|E|}$ ($v\in V$) be the subspace $\mathrm{span}\{|v,j\rangle: j=0,\dots, d_v-1\}$. 
A generalized Grover operator ${\cal G}_v$ on ${\cal H}^{d_v}$ has the form 
\begin{equation}
	{\cal G}_v\,=\, 2\ket{\psi_v}\bra{\psi_v} - I_{\mathcal{H}^{d_v}},
\end{equation}
where $\ket{\psi_v} \in \mathcal{H}^{d_v}$ is a unit vector. 
\end{definition}
The Grover operator is obtained when $\langle v,j|\psi_v\rangle=1/{\sqrt d_v}$ for all $j\in \{0,\dots, d_v-1\}$. 
We put $|\alpha^{(v)}\rangle:= \mathcal{V}_\xi^{-1}|\psi_v\rangle\in \mathcal{H}_A$. 
Thus the generalized Grover operator on $\mathcal{H}_v\subset \mathcal{H}_A$ is $C_v:=\mathcal{V}_\xi {\cal G}_v \mathcal{V}_\xi^{-1}$, expressed by 
	\[ C_v= 2|\alpha^{(v)}\rangle \langle \alpha^{(v)}|-I_{\mathcal{H}_v}. \]        

\begin{definition}\label{def:subdivisiongraph}
The \textbf{subdivision graph} $S(\Gamma)$ of a multigraph $\Gamma=(V,E)$ is defined as follows: 
	\begin{align*} 
	V(S(\Gamma)) &= V(\Gamma) \cup E(\Gamma), \\
        E(S(\Gamma)) &= \{ \{v,e\}: v\in V(\Gamma), \; e\in E(\Gamma),\; v\in e \}. 
        \end{align*}
\end{definition}
That means that $S(\Gamma)$ is the graph obtained from $\Gamma$ by adding a vertex in the middle of each edge of $\Gamma$.
\noindent \\
\begin{remark}\label{rem4}
The map $\eta: A(\Gamma)\to E(S(\Gamma))$ such that $\eta(a)=\{ o(a), \{a,\bar{a}\}\}$ is bijection since $S(\Gamma)$ is a simple graph.  
\end{remark}
\noindent \\
See Fig.~\ref{fig:A}, for the equivalence relations of $\{ |v,j\rangle: v\in V,\;0\leq j\leq d_v \}$, $A(\Gamma)$ and $E(S(\Gamma))$. 
\begin{figure}[htbp]
\begin{center}
	\includegraphics[width=100mm]{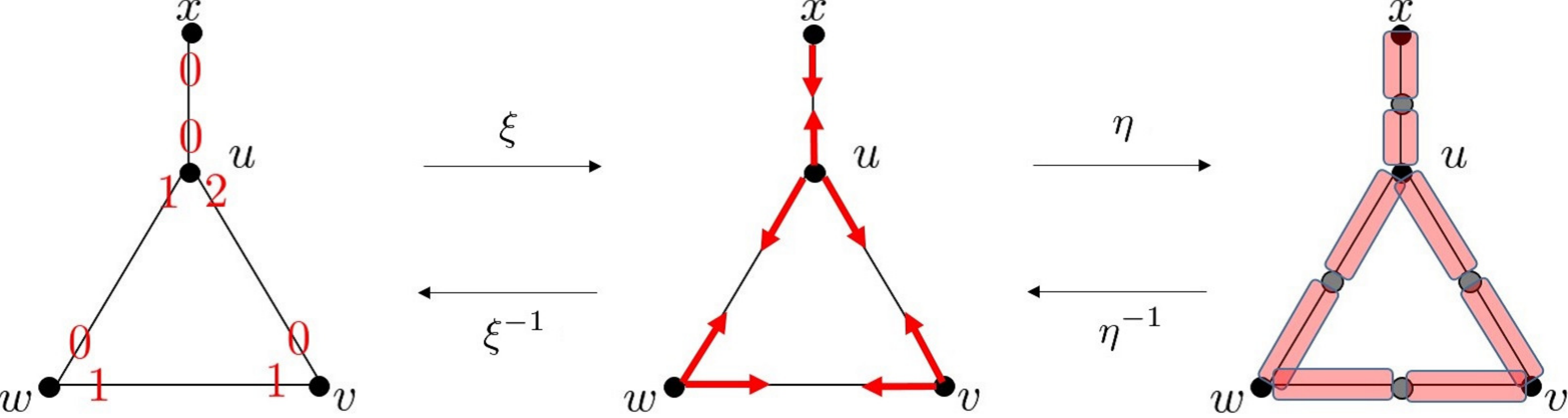}
\end{center}
\caption{The equivalence relations of $\{ |v,j\rangle: v\in V,\;0\leq j\leq d_v \}$, $A(\Gamma)$ and $E(S(\Gamma))$. }
\label{fig:A}
\end{figure}

\quad Proposition~7.2.2 of Ref.~\cite{GY05} shows that the subdivision graph $S(\Gamma)$ of a multigraph $\Gamma$ is a bipartite graph $\Gamma'=(X,Y,E')$. 
Set $X$ comprises the vertices $v\in V$ and set $Y$ comprises the new vertices so that $|Y|=|E|$. 
The cardinality of $E'$ is $2|E|$. 

\section{Main Results}\label{sec:1}
Consider the following lemma. 
\begin{lemma}\label{rem2}
Let $\mathcal{H}_E \subset \mathcal{H}^{nm}$ be the subset whose computational set is $\{ \ket{x}\otimes \ket{y}: \{x,y\}\in E(\Gamma) \}$.
It holds that 
	\[ W(\mathcal{H}_E)=\mathcal{H}_E,\;\;W|_{{\mathcal{H}_E}^\perp}=1, \]
which means a matrix representation of $W$ is 
	\begin{equation}
	W \,=\, \left[ \begin{array}{cccc}
	W|_{\mathcal{H}_E} & 0 & \cdots & 0 \\
	0 & 1 & \cdots & 0 \\
	\vdots & \vdots  & \ddots & \vdots  \\
	0 & 0 &\cdots & 1 \end{array} \right],
	\end{equation}
\end{lemma}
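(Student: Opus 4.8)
The plan is to exploit that both families of reflection vectors $\ket{\phi_x}$ and $\ket{\psi_y}$ are supported entirely on the edge subspace $\mathcal{H}_E$. From (\ref{ht_phi_x}) the coefficient of $\ket{x,y}$ in $\ket{\phi_x}$ is $\sqrt{p_{xy}}\,\mathrm{e}^{i\theta_{xy}}$, and since $P$ satisfies $(M)_{x,y}=0\Rightarrow p_{xy}=0$, this coefficient vanishes whenever $\{x,y\}\notin E$; the analogous statement for $\ket{\psi_y}$ follows from the corresponding property of $Q$ via (\ref{ht_psi_y}). Hence $\ket{\phi_x},\ket{\psi_y}\in\mathcal{H}_E$ for all $x\in X$ and $y\in Y$.

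First I would prove $W|_{\mathcal{H}_E^\perp}=1$. Let $\ket{x,y}\in\mathcal{H}_E^\perp$ be a basis vector, so $\{x,y\}\notin E$. Because $\ket{\phi_{x'}}$ has no component along $\ket{x,y}$, we get $\braket{\phi_{x'}}{x,y}=0$ for every $x'\in X$, so (\ref{ht_RA}) gives $R_0\ket{x,y}=-\ket{x,y}$; identically, $\braket{\psi_{y'}}{x,y}=0$ for every $y'\in Y$, so (\ref{ht_RB}) gives $R_1\ket{x,y}=-\ket{x,y}$. Composing, $W\ket{x,y}=R_1R_0\ket{x,y}=\ket{x,y}$, which establishes that $W$ fixes $\mathcal{H}_E^\perp$ pointwise.

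Next I would show $W(\mathcal{H}_E)=\mathcal{H}_E$. The most direct route is to note that, since each $\ket{\phi_x}\in\mathcal{H}_E$, the reflection $R_0=2\sum_x\ket{\phi_x}\bra{\phi_x}-I$ maps $\mathcal{H}_E$ into itself, and likewise $R_1$; as $R_0$ and $R_1$ are involutions the same holds for $W^{-1}=R_0R_1$, whence $W(\mathcal{H}_E)=\mathcal{H}_E$. Alternatively, having shown $W$ fixes $\mathcal{H}_E^\perp$ pointwise, unitarity gives $\braket{Ww}{u}=\braket{Ww}{Wu}=\braket{w}{u}=0$ for $w\in\mathcal{H}_E$ and $u\in\mathcal{H}_E^\perp$, so $Ww\in\mathcal{H}_E$, and finite-dimensionality upgrades this to equality. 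The block-diagonal matrix form then follows by ordering the computational basis so that the edge states precede the non-edge states.

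I do not anticipate a genuine obstacle here: the entire content is the observation that Szegedy's reflection vectors never reach the non-adjacent pairs $\ket{x,y}$ with $\{x,y\}\notin E$, so $W$ collapses to the identity on $\mathcal{H}_E^\perp$. The only point requiring care is the bookkeeping of the tensor-product space $\mathcal{H}^{mn}$, which is strictly larger than $\mathcal{H}_E$ because it contains all non-edge pairs, and verifying that the stated overlaps vanish for every index rather than merely the matching one.
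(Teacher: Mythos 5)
Your proof is correct and takes essentially the same route as the paper's: both hinge on the observation that the reflection vectors $\ket{\phi_x},\ket{\psi_y}$ lie in $\mathcal{H}_E$, so that $R_0$ and $R_1$ act as $-I$ on $\mathcal{H}_E^\perp$, making $W$ the identity there and forcing $W(\mathcal{H}_E)=\mathcal{H}_E$ (your second, unitarity-based argument for the invariance is exactly the paper's). If anything, you are more explicit than the paper, which asserts $R_0\psi_2=R_1\psi_2=-\psi_2$ without spelling out the support argument via $p_{xy}=0$ and $q_{yx}=0$ for non-edges.
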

\begin{proof}
Let $\Pi$ be the orthogonal projection onto $\mathcal{H}_E$. Put $\psi=\psi_1+\psi_2$ with $\psi_1\in \mathcal{H}_E$ and $\psi_2\in {\mathcal{H}_E}^\perp$. 
Then we have $\Pi W\psi=\Pi W(\psi_1+\psi_2) = W\psi_1$ since $R_0\psi_2=R_1\psi_2=-\psi_2$. On the other hand, $W\Pi\psi=W\psi_1$. 
So we have $\Pi W=W\Pi$, which implies $W(\mathcal{H}_E)=\mathcal{H}_E$. Indeed, for any $\ket{f}\in \mathcal{H}_E, \ket{g}\in {\mathcal{H}_E}^\perp$, 
$\braket{g}{Wf}=\braket{W^*g}{f}=\braket{g}{f}=0$, which implies $W(\mathcal{H}_E)\subset \mathcal{H}_E$. 
On the other hand, since $W$ is a bijection onto $\mathcal{H}^{nm}$, for any $f\in\mathcal{H}^{nm}$, there uniquely exists $\ket{g}\in \mathcal{H}^{nm}$
such that $\ket{f}=W\ket{g}$. Using this $\Pi \ket{f}=\Pi W\ket{g}=W\Pi \ket{g}\in W(\mathcal{H}_E)$. 
\end{proof}

Due to the isomorphism between $\mathcal{H}_{A}$ and $\mathcal{H}^{2|E|}$ in (\ref{xi}), from now on we identify $\mathcal{H}^{2|E|}$ with $\mathcal{H}_A$, 
$\mathcal{V}_\xi C'\mathcal{V}_\xi^{-1}$ with $C'$, and $\mathcal{V}_\xi S\mathcal{V}_\xi^{-1}$ with $S$. 
The main result of this work is:
\begin{theorem}\label{theo1}
Let $\mathcal{U}_\eta : \mathcal{H}_{A(\Gamma)}\to \mathcal{H}_{E(S(\Gamma))}$ be the unitary representation of the bijection map 
$\eta: A(\Gamma)\to E(S(\Gamma))$ in Remark~\ref{rem4}, that is, 
\begin{equation}\label{mapUeta}
\mathcal{U}_\eta\ket{a} = \ket{o(a)}\otimes \ket{\{a,\bar{a}\}}.
\end{equation}
A flip-flop coined QW on a multigraph $\Gamma=(V,E)$ such that $C'=\bigoplus_{v\in V} C_v$ and each $C_v$ is a generalized Grover operator can be cast into the extended Szegedy's model on the subdivision graph $S(\Gamma)$, 
that is, 
	\[ U=\mathcal{U}_\eta^{-1} W|_{\mathcal{H}_{E(S(\Gamma))}} \mathcal{U}_\eta, \]
where $U: \mathcal{H}_{A(\Gamma)}\to \mathcal{H}_{A(\Gamma)}$ is the time evolution of a flip-flop coined QW on $\Gamma$ with the generalized Grover coin and 
$W: \mathcal{H}^{|V(\Gamma)|}\otimes \mathcal{H}^{|E(\Gamma)|}\to \mathcal{H}^{|V(\Gamma)|}\otimes \mathcal{H}^{|E(\Gamma)|}$ 
is the time evolution of an extended Szegedy walk on $S(\Gamma)$. 
\end{theorem}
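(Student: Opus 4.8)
The plan is to establish the single operator identity $\mathcal{U}_\eta\, U\, \mathcal{U}_\eta^{-1} = W|_{\mathcal{H}_E}$ by factoring both evolutions and matching them piece by piece. Writing $U = S\,C'$ and $W = R_1 R_0$, and using the multiplicativity of conjugation, it suffices to prove the two identities $\mathcal{U}_\eta\, C'\, \mathcal{U}_\eta^{-1} = R_0|_{\mathcal{H}_E}$ (coin versus first reflection) and $\mathcal{U}_\eta\, S\, \mathcal{U}_\eta^{-1} = R_1|_{\mathcal{H}_E}$ (shift versus second reflection). First I would record, in the spirit of Lemma~\ref{rem2}, that each of $R_0$ and $R_1$ individually leaves $\mathcal{H}_E$ invariant: every $\ket{\phi_v}$ and every $\ket{\psi_e}$ lies in $\mathcal{H}_E$, while on ${\mathcal{H}_E}^\perp$ both reflections act as $-I$. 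Hence $W|_{\mathcal{H}_E} = R_1|_{\mathcal{H}_E}\, R_0|_{\mathcal{H}_E}$ and the factorization is legitimate. I also note that the image of $\mathcal{U}_\eta$ is exactly $\mathcal{H}_E$, since by Remark~\ref{rem4} the pair $(o(a),\{a,\bar a\})$ ranges over all adjacent pairs $(v,e)$ of $S(\Gamma)$ as $a$ runs over $A(\Gamma)$; throughout I write $a_{v,e}$ for the unique arc with $o(a_{v,e})=v$ lying on the edge $e$, and $\ket{v,e}$ for the Szegedy basis vector $\ket{v}\otimes\ket{e}$.

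For the coin identity, observe that $\mathcal{U}_\eta$ carries the block $\mathcal{H}_v=\mathrm{span}\{\ket a: o(a)=v\}$ isometrically onto the coordinate subspace $\mathrm{span}\{\ket{v,e}: e\ni v\}$, and that these subspaces are mutually orthogonal across $v$ with direct sum $\mathcal{H}_E$. Conjugating the generalized Grover coin $C_v = 2\ket{\alpha^{(v)}}\bra{\alpha^{(v)}}-I_{\mathcal{H}_v}$ by the unitary $\mathcal{U}_\eta$ produces the reflection $2\,\mathcal{U}_\eta\ket{\alpha^{(v)}}\bra{\alpha^{(v)}}\mathcal{U}_\eta^{-1} - I$ on that image. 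I then define the $X$-side Szegedy data by $\sqrt{p_{ve}}\,\textrm{e}^{i\theta_{ve}} := \braket{a_{v,e}}{\alpha^{(v)}}$; since $\ket{\alpha^{(v)}}$ is a unit vector we get $\sum_{e\ni v}p_{ve}=1$, so $P$ is right-stochastic and $\mathcal{U}_\eta\ket{\alpha^{(v)}} = \ket{\phi_v}$. Summing over $v$ identifies $\mathcal{U}_\eta\, C'\, \mathcal{U}_\eta^{-1}$ with $\bigoplus_{v}\big(2\ket{\phi_v}\bra{\phi_v}-I\big) = R_0|_{\mathcal{H}_E}$. This step is essentially free, because a generalized Grover coin is by definition a reflection about a single unit vector, which is exactly the form of the summands of $R_0$.

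For the shift identity, note that $\mathcal{U}_\eta\, S\, \mathcal{U}_\eta^{-1}$ sends $\ket{v,e}\mapsto\ket{w,e}$ for each edge $e=\{v,w\}$, because $S\ket{a_{v,e}}=\ket{\overline{a_{v,e}}}=\ket{a_{w,e}}$; that is, it is the transposition on each two-dimensional subspace $\mathrm{span}\{\ket{v,e},\ket{w,e}\}$. Here the decisive structural fact is that each edge-vertex $e\in Y$ has degree exactly two in $S(\Gamma)$, so $\ket{\psi_e}$ lives in this same two-dimensional subspace and $R_1$ splits as a direct sum of $2\times 2$ reflections. A direct computation of $2\ket{\psi_e}\bra{\psi_e}-I$ in the ordered basis $\{\ket{v,e},\ket{w,e}\}$ shows that it equals the transposition $\left(\begin{smallmatrix}0&1\\1&0\end{smallmatrix}\right)$ if and only if $\ket{\psi_e}=\tfrac1{\sqrt2}(\ket{v,e}+\ket{w,e})$, i.e. exactly when the $Y$-side map is uniform, $q_{ev}=q_{ew}=\tfrac12$, with equal angles $\theta'_{ve}=\theta'_{we}$. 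Choosing $Q$ and the angles this way gives $\mathcal{U}_\eta\, S\, \mathcal{U}_\eta^{-1}=R_1|_{\mathcal{H}_E}$, and combining with the coin identity and $W|_{\mathcal{H}_E}=R_1|_{\mathcal{H}_E}R_0|_{\mathcal{H}_E}$ yields $U=\mathcal{U}_\eta^{-1}W|_{\mathcal{H}_E}\mathcal{U}_\eta$.

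The main obstacle I anticipate is on the shift side, and it is conceptual rather than computational: one must recognize that, whereas the coin maps to an essentially arbitrary $R_0$, the shift forces the $Y$-side of the Szegedy walk to be completely rigid. The degree-two property of the subdivision vertices is what reduces $R_1$ to a family of $2\times 2$ reflections, and in two dimensions a reflection coincides with a transposition only for the symmetric fixed vector, which pins down $Q$ as uniform. Care is also needed to confirm that $R_0$ and $R_1$ each preserve $\mathcal{H}_E$, so that restriction commutes with composition; this is precisely where the argument dovetails with Lemma~\ref{rem2}.
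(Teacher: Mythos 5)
Your proposal is correct and takes essentially the same route as the paper's own proof: both decompose $U=S\,C'$, send the Grover coin blocks to $R_0$ via $\mathcal{U}_\eta\ket{\alpha^{(v)}}=\ket{\phi^{(v)}}$ with $p_{v,e}=|\alpha_a|^2$, recognize the flip-flop shift as a direct sum of $2\times 2$ swaps (reflections about the symmetric vectors $(\ket{a}+\ket{\bar a})/\sqrt2$) forcing the uniform $Y$-side data $q_{e,v}=q_{e,v'}=1/2$, and use the fact that $R_0$ and $R_1$ act as $-I$ on ${\mathcal{H}_E}^\perp$ (Lemma~\ref{rem2}) to make the restriction legitimate. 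The only point the paper verifies that you pass over is the support-matching condition $Q'^{\mathrm{T}}=P'$, which relies on the standing assumption $\alpha_a=\braket{a}{\alpha^{(v)}}\neq 0$ for all $a$ with $o(a)=v$; without it the resulting Szegedy walk would live on a proper subgraph of $S(\Gamma)$ rather than on the subdivision graph itself.
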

\begin{proof}
Since $C_v$ is a generalized Grover operator, it can be written as
\begin{equation}\label{C}
	C_v\,=\,2\,\ket{\alpha^{(v)}}\bra{\alpha^{(v)}}-I_{\mathcal{H}_v},
\end{equation}
where $\mathcal{H}_v$ is the subspace spanned by $\{\ket{a}:o(a)=v\}$ and $\ket{\alpha^{(v)}}$ is a unit vector on $\mathcal{H}_v$. 
We put 
	\begin{equation}\ket{\alpha^{(v)}}\,=\, \sum_{a:o(a)=v} \alpha_a \ket{a} \label{alpha_v} \end{equation}
with $\sum_{a:o(a)=v}|\alpha_a|^2=1$. 
By the assumption of a generalized Grover operator, we have $\alpha_a:=\langle a|\alpha^{(v)}\rangle\neq 0$ for every $a\in A(\Gamma)$ with $o(a)=v$. 
Since the coin $C'$ is the direct sum of generalized Grover operators $\{C_v\}_{v\in V(\Gamma)}$, then the coin $C'$ is rewritten as 
\begin{equation}
	C'\,=\,2\sum_{v\in V(\Gamma)}\ket{\alpha^{(v)}}\bra{\alpha^{(v)}}-I_{\mathcal{H}_{A(\Gamma)}}. 
\end{equation}
Therefore $\mathrm{spec}(C')=\{\pm 1\}$ and 
	\begin{align*}
        \mathrm{ker}(1-C') &= \mathrm{span}\{|\alpha^{(v)}\rangle: v\in V\}. 
        \end{align*}
We call the above LHS $(+1)$-eigenspace of $C'$. 
Since $S|a\rangle=|\bar{a}\rangle$ and $S|\bar{a}\rangle=|a\rangle$, then $\mathrm{span}\{|a\rangle, |\bar{a}\rangle\}\subset \mathcal{H}_A$ is invariant under the action of $S$, and $S$ acts as 
	\[ S\cong \begin{bmatrix}  0 & 1 \\ 1 & 0 \end{bmatrix} \]
in this invariant subspace. Therefore $\mathrm{spec}(S)=\{\pm 1\}$, and 
	\begin{align*}
	\mathrm{ker}(1-S) &= \mathrm{span}\{|\beta^{(e)}\rangle: e\in E\}, \\
        \mathrm{ker}(1+S) &= \mathrm{span}\{|\gamma^{(e)}\rangle: e\in E\}.
	\end{align*}
Here 
\begin{align}
	\ket{\beta^{(e)}}\,=\, \frac{\ket{a}+\ket{\bar{a}}}{\sqrt 2}, \;\; \ket{\gamma^{(e)}}\,=\, \frac{\ket{a}-\ket{\bar{a}}}{\sqrt 2},\label{evS}
\end{align}
where $e=\{a,\bar{a}\}\in E(\Gamma)$. 
The $\mathrm{ker}(1-S)$ is the $(+1)$-eigenspace of $S$. 
Thus $S$ is expressed by 
\begin{align}
	S &= \sum_{e\in E(\Gamma)}\ket{\beta^{(e)}}\bra{\beta^{(e)}}-\ket{\gamma^{(e)}}\bra{\gamma^{(e)}} \notag \\
          &= 2\sum_{e\in E(\Gamma)}\ket{\beta^{(e)}}\bra{\beta^{(e)}}-I_{\mathcal{H}_{A(\Gamma)}}.
\end{align}

\begin{figure}[h!] 
\centering 
\def\svgwidth{8cm} 
\begingroup%
  \makeatletter%
  \providecommand\color[2][]{%
    \errmessage{(Inkscape) Color is used for the text in Inkscape, but the package 'color.sty' is not loaded}%
    \renewcommand\color[2][]{}%
  }%
  \providecommand\transparent[1]{%
    \errmessage{(Inkscape) Transparency is used (non-zero) for the text in Inkscape, but the package 'transparent.sty' is not loaded}%
    \renewcommand\transparent[1]{}%
  }%
  \providecommand\rotatebox[2]{#2}%
  \ifx\svgwidth\undefined%
    \setlength{\unitlength}{445.81422596bp}%
    \ifx\svgscale\undefined%
      \relax%
    \else%
      \setlength{\unitlength}{\unitlength * \real{\svgscale}}%
    \fi%
  \else%
    \setlength{\unitlength}{\svgwidth}%
  \fi%
  \global\let\svgwidth\undefined%
  \global\let\svgscale\undefined%
  \makeatother%
  \begin{picture}(1,0.35531431)%
    \put(0,0){\includegraphics[width=\unitlength]{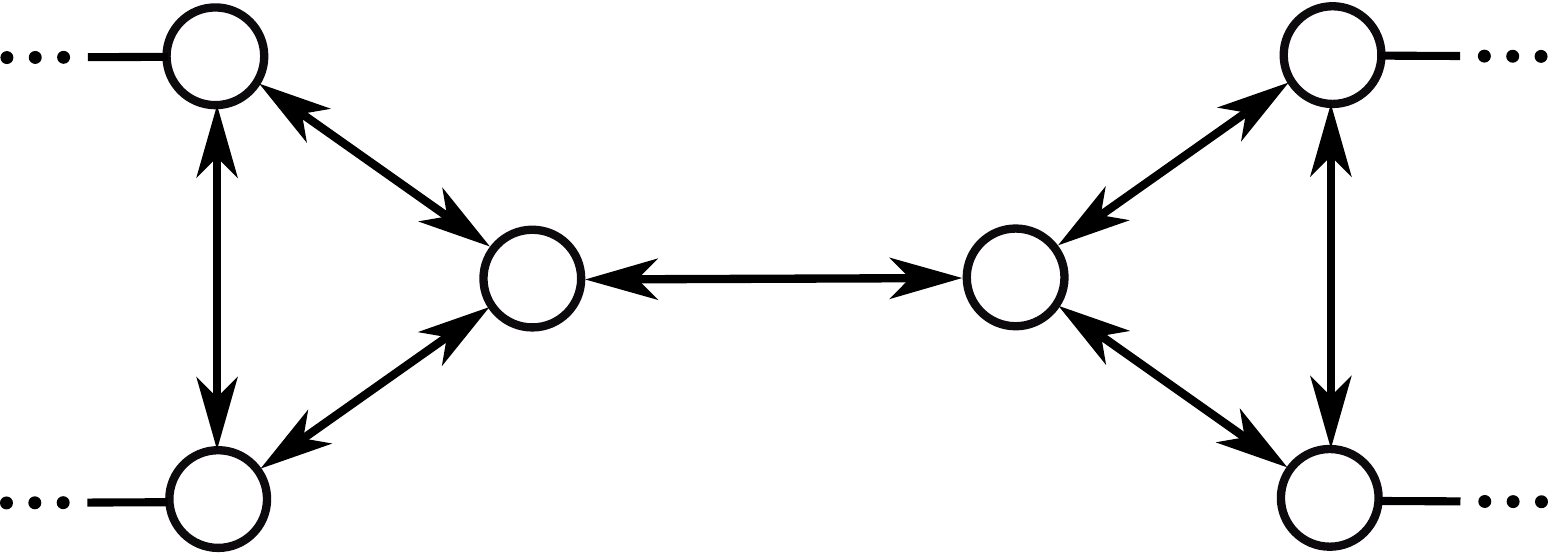}}%
    \put(-0.08617618,0.34213219){\color[rgb]{0,0,0}\makebox(0,0)[lb]{\smash{}}}%
    \put(0.50669734,-0.22616816){\color[rgb]{0,0,0}\makebox(0,0)[b]{\smash{}}}%
    \put(0.25816196,-0.00850364){\color[rgb]{0,0,0}\makebox(0,0)[lt]{\begin{minipage}{0.18721396\unitlength}\centering \end{minipage}}}%
    \put(0.2536136,0.1418708){\color[rgb]{0,0,0}\makebox(0,0)[b]{\smash{}}}%
    \put(0.51537649,-0.14114635){\color[rgb]{0,0,0}\makebox(0,0)[lt]{\begin{minipage}{0.09284753\unitlength}\centering \end{minipage}}}%
    \put(0.20956635,-0.24522317){\color[rgb]{0,0,0}\makebox(0,0)[lt]{\begin{minipage}{0.10441216\unitlength}\centering \end{minipage}}}%
    \put(0.06718612,0.0316273){\color[rgb]{0,0,0}\makebox(0,0)[b]{\smash{}}}%
    \put(0.62904765,0.16647129){\color[rgb]{0,0,0}\makebox(0,0)[b]{\smash{}}}%
    \put(0.39778787,0.20399335){\color[rgb]{0,0,0}\makebox(0,0)[lb]{\smash{$\bar{a}$}}}%
    \put(0.57750194,0.20343417){\color[rgb]{0,0,0}\makebox(0,0)[lb]{\smash{$a$}}}%
    \put(0.33268244,0.17108898){\color[rgb]{0,0,0}\makebox(0,0)[lb]{\smash{$v$}}}%
    \put(0.64307154,0.16586854){\color[rgb]{0,0,0}\makebox(0,0)[lb]{\smash{$v'$}}}%
  \end{picture}%
\endgroup%

\caption{Example of a graph depicting two generic vertices $v$ and $v'$. The coin direction $j$ represented by arc $a$ points from $v$ to $v'$ and $j'$ represented by arc $\bar{a}$ points from $v'$ to $v$. } 
\label{fig:graph1}
\end{figure}

Consider the subdivision graph $S(\Gamma)$. $S(\Gamma)$ is obtained from $\Gamma$ by adding a new vertex in the middle of each edge $e\in E$. 
If $e$ contains arcs $a$ and $\bar{a}$ as depicted in Fig.~\ref{fig:graph1}, 
the label for the new vertex is $\{a,\bar{a}\}$ or equivalently $\{\bar{a},a\}$. 
We consider identical the labels $\{a,\bar{a}\}$ and $\{\bar{a},a\}$. The new vertex is depicted in Fig.~\ref{fig:graph2}. 
Remark~\ref{rem4} implies $\mathcal{H}_{A(\Gamma)}\cong \mathcal{H}_{E(S(\Gamma))}\subset \mathcal{H}^{|V(\Gamma)|}\otimes \mathcal{H}^{|E(\Gamma)|}$. 
The goal now is to define a Szegedy model on the subdivision graph $S(\Gamma)$. 
To this end, consider Hilbert space ${\cal H}^{|V(\Gamma)|}\otimes{\cal H}^{|E(\Gamma)|}$ as the total state space of our desired Szegedy's model.
The computational basis uses the following notation: 
The first $2|E|$ vectors are given by $\ket{v}\otimes\ket{e}$, where $v\in V(\Gamma)$ is in the end of $e\in E(\Gamma)$. 
We will consider identical the vectors $\ket{e}=\ket{\{a,\bar{a}\}}$ and $\ket{\{\bar{a},a\}}$ with $e=\{a,\bar{a}\}$. 
The remaining vectors are given by $\ket{v}\otimes\ket{e}$, where $v$ is not the ends of $e$. 

\begin{figure}[h!] 
\centering 
\def\svgwidth{8cm} 
\begingroup%
  \makeatletter%
  \providecommand\color[2][]{%
    \errmessage{(Inkscape) Color is used for the text in Inkscape, but the package 'color.sty' is not loaded}%
    \renewcommand\color[2][]{}%
  }%
  \providecommand\transparent[1]{%
    \errmessage{(Inkscape) Transparency is used (non-zero) for the text in Inkscape, but the package 'transparent.sty' is not loaded}%
    \renewcommand\transparent[1]{}%
  }%
  \providecommand\rotatebox[2]{#2}%
  \ifx\svgwidth\undefined%
    \setlength{\unitlength}{468.89639396bp}%
    \ifx\svgscale\undefined%
      \relax%
    \else%
      \setlength{\unitlength}{\unitlength * \real{\svgscale}}%
    \fi%
  \else%
    \setlength{\unitlength}{\svgwidth}%
  \fi%
  \global\let\svgwidth\undefined%
  \global\let\svgscale\undefined%
  \makeatother%
  \begin{picture}(1,0.33795675)%
    \put(0,0){\includegraphics[width=\unitlength]{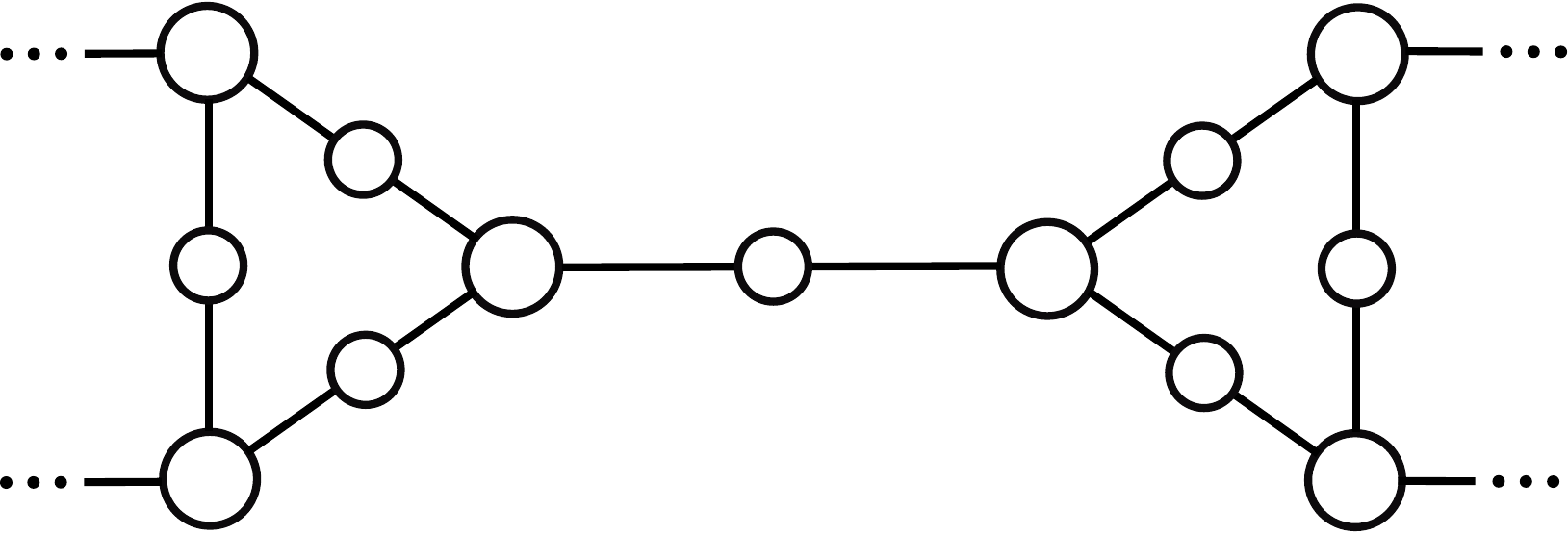}}%
    \put(-0.00345186,0.32615911){\color[rgb]{0,0,0}\makebox(0,0)[lb]{\smash{}}}%
    \put(0.56023652,-0.21416575){\color[rgb]{0,0,0}\makebox(0,0)[b]{\smash{}}}%
    \put(0.32393568,-0.00721611){\color[rgb]{0,0,0}\makebox(0,0)[lt]{\begin{minipage}{0.17799806\unitlength}\centering \end{minipage}}}%
    \put(0.31961122,0.13575591){\color[rgb]{0,0,0}\makebox(0,0)[b]{\smash{}}}%
    \put(0.56848842,-0.13332928){\color[rgb]{0,0,0}\makebox(0,0)[lt]{\begin{minipage}{0.08827696\unitlength}\centering \end{minipage}}}%
    \put(0.27773227,-0.23228275){\color[rgb]{0,0,0}\makebox(0,0)[lt]{\begin{minipage}{0.0992723\unitlength}\centering \end{minipage}}}%
    \put(0.14236093,0.03093932){\color[rgb]{0,0,0}\makebox(0,0)[b]{\smash{}}}%
    \put(0.67656394,0.1591454){\color[rgb]{0,0,0}\makebox(0,0)[b]{\smash{}}}%
    \put(0.38839534,0.18892779){\color[rgb]{0,0,0}\makebox(0,0)[b]{\smash{}}}%
    \put(0.6045509,0.0848115){\color[rgb]{0,0,0}\makebox(0,0)[lb]{\smash{}}}%
    \put(0.43823228,0.10){\color[rgb]{0,0,0}\makebox(0,0)[lb]{\smash{$\{a,\bar{a}\}$}}}%
    \put(0.37445808,0.18834854){\color[rgb]{0,0,0}\makebox(0,0)[lb]{\smash{$j$}}}%
    \put(0.31695216,0.16139805){\color[rgb]{0,0,0}\makebox(0,0)[lb]{\smash{$v$}}}%
    \put(0.65364443,0.15459466){\color[rgb]{0,0,0}\makebox(0,0)[lb]{\smash{$v'$}}}%
    \put(0.60086086,0.18837492){\color[rgb]{0,0,0}\makebox(0,0)[lb]{\smash{$j'$}}}%
  \end{picture}%
\endgroup%
\caption{Subdivision graph depicting the label $\{a,\bar{a}\}$ for the new vertex placed between vertices $v$ and $v'$ with coin directions $j$ and $j'$. } 
\label{fig:graph2}
\end{figure}
Vectors $\ket{\alpha^{(v)}}$ are given in terms of the computational basis $\{\ket{a}:a\in A\}$ of ${\cal H}_A$.
Define vectors $\ket{\phi^{(v)}}$ in  $\mathcal{H}_{E(S(\Gamma))}\subset {\cal H}^{|V(\Gamma)|}\otimes{\cal H}^{|E(\Gamma)|}$ using $\ket{\alpha^{(v)}}$ given by (\ref{alpha_v}) 
and the one-to-one map given by~(\ref{mapUeta}); 
we replace $\ket{a}\in \mathcal{H}_{A(\Gamma)}$ by $\ket{o(a)}\otimes\ket{\{a,\bar{a}'\}}\in \mathcal{H}_{E(S(\Gamma))}$ obtaining 
\begin{equation}\label{phi_v}
	\ket{\phi^{(v)}}\,=\, \sum_{a:o(a)=v} \alpha_a \ket{v}\otimes\ket{\{a,\bar{a}\}},
\end{equation}
and the same for vectors $\ket{a}$ in $\ket{\beta^{(e)}}$ given by (\ref{evS}), they are also replaced by $\ket{o(a)}\otimes\ket{\{a,\bar{a}\}}$ 
obtaining vectors $\ket{\psi^{(e)}}$, which are defined as
\begin{equation}\label{psi_e}
	\ket{\psi^{(e)}}\,=\, \frac{\ket{o(a)}\otimes\ket{\{a,\bar{a}\}}+\ket{t(a)}\otimes\ket{\{a,\bar{a}\}}}{\sqrt 2},
\end{equation}
where $e=\{a,\bar{a}\}$. 
Notice that $\ket{\psi^{(e)}}$ can be factorized because the $\ket{\{a,\bar{a}\}}$ and $\ket{\{\bar{a},a\}}$ are identical. 

Now we define an operator on ${\cal H}^{|V(\Gamma)|}\otimes{\cal H}^{|E(\Gamma)|}$ that is expected to be a well defined evolution operator of the Szegedy's model by $W=R_1 R_0$, where
\begin{eqnarray}
	R_0 &=& 2\sum_{v\in V(\Gamma)}\ket{\phi^{(v)}}\bra{\phi^{(v)}}-I,\\
	R_1 &=& 2\sum_{e\in E(\Gamma)}\ket{\psi^{(e)}}\bra{\psi^{(e)}}-I.
\end{eqnarray}
Using~(\ref{mapUeta}), we obtain $\mathcal{U}_\eta\ket{\alpha^{(v)}}=\ket{\phi^{(v)}}$ and $\mathcal{U}_\eta\ket{\beta^{(e)}}=\ket{\psi^{(e)}}$. 
Thus 
	\[ \mathcal{U}_\eta \,C'\, \mathcal{U}_\eta^{-1}=2 \sum_{v\in V}\ket{\phi^{(v)}}\bra{\phi^{(v)}}-I_{\mathcal{H}_{E(S(\Gamma))}}=\left(2 \sum_{v\in V}\ket{\phi^{(v)}}\bra{\phi^{(v)}}-I\right)\bigg|_{\mathcal{H}_{E(S(\Gamma))} } \] 
and 	\[ \mathcal{U}_\eta \,S\, \mathcal{U}_\eta^{-1}=2 \sum_{e\in E}\ket{\psi^{(e)}}\bra{\psi^{(e)}}-I_{\mathcal{H}_{E(S(\Gamma))}}=\left(2 \sum_{e\in E}\ket{\psi^{(e)}}\bra{\psi^{(e)}}-I\right)\bigg|_{\mathcal{H}_{E(S(\Gamma))} }. \]
Thus we obtain $U=\mathcal{U}_\eta^{-1}W|_{\mathcal{H}_{E(S(\Gamma))}}\mathcal{U}_\eta$. 

Next, let us show that this operator $W$ is a well-defined Szegedy evolution operator. 
First we check that $W$ is restricted to $\mathcal{H}_{E(S(\Gamma))}$, 
which is spanned by the first $2|E|$ computational vectors of $\mathcal{H}^{|V|}\otimes \mathcal{H}^{|E|}$. 
Using $\ket{\phi^{(v)}}$ and $\ket{\psi^{(e)}}$, 
we define matrices $P$ and $Q$, whose entries are $p_{v,e}=\left|\alpha_e\right|^2$ and $q_{e,v}=q_{e,v'}=1/2$, respectively. 
$P$ and $Q$ are right transition matrices. In fact, $\sum_e p_{v,e}=1$, $\forall v\in V$ because $\ket{\alpha^{(v)}}$ has unit $\ell^2$-norm and $q_{e,v}+q_{e,v'}=1$, $\forall e\in E$.  
Let $P',Q'$ be the matrices obtained from $P,Q$ by replacing the nonzero entries by $1$. 
We also have to show that $Q'^{\textrm{T}}=P'$ (see Definition \ref{def:SzegedyQW}), or equivalently $p'_{v,e}=1$ $\Leftrightarrow$ $q'_{e,v}=1$. 
We have the following equivalent relations : 
	\begin{align*}
        p'_{v,e}=1 & \Leftrightarrow \{v,e\}\in E(S(G)) 
        	     \Leftrightarrow e\in \{a,\bar{a}\} \mathrm{\;with\;} v\in\{ o(a),t(a) \} \\
        	   & \Leftrightarrow \ket{v}\otimes \ket{\{a,\bar{a}\}}\in \mathrm{Computational \;basis\; of\;}\mathcal{H}_{E(S(\Gamma))} \mathrm{\;with\;} v\in\{ o(a),t(a) \} \\
                   & \Leftrightarrow q_{e,v}=1/\sqrt{2} 
                     \Leftrightarrow q'_{e,v}=1. 
        \end{align*} 
The first equivalence of ``$\Leftarrow$'' follows from $\alpha_a \neq 0$ for all $a\in \{a'\in A(\Gamma): o(a')=v\}$, which is our assumption. 
The second follows from the last equivalence and the definition of $\psi^{(e)}$ presented in~(\ref{psi_e}). Thus we obtain $Q'^{\textrm{T}}=P'$. 

Finally, we check the well-definedness of the remaining space $\mathcal{H}_{E(S(\Gamma))}^\perp$ spanned by the $|V||E|-2|E|$ vectors. 
For any $\ket{f}\in \mathcal{H}_{E(S(\Gamma))}^\perp$, $\ket{\phi^{(v)}}$ and $\ket{\psi^{(e)}}$ are orthogonal to $\ket{f}$ since all the computational basis of $\mathcal{H}_{E(S(\Gamma))}^\perp$ are 
orthogonal to $\ket{\phi^{(v)}}$ and $\ket{\psi^{(e)}}$ by the definition. 
Thus $R_0\ket{f}=R_1\ket{f}=-\ket{f}$, which implies $W\ket{f}=\ket{f}$. 
Therefore we have $W|_{\mathcal{H}_{E(S(\Gamma))}^\perp}=I_{\mathcal{H}_{E(S(\Gamma))}}$. 
By Lemma~\ref{rem2}, the walk restricted to $\mathcal{H}_{E(S(\Gamma))}^\perp$ is well-defined.
Taken all together, the walk, whose time evolution is driven by $W$, is a well-defined Szegedy walk.
\end{proof}

In Theorem~\ref{theo1}, if $C_v$ for some vertex $v$ has more the one $(+1)$-eigenvector, $C_v$ is similar to the direct sum of smaller matrices. In this case, the graph on which Szegedy's model takes place is not the subdivision graph of the graph on which the coined model takes place. This case was addressed in Ref.~\cite{Por16}.

Finally, we provide an example in Figure \ref{fig:example}, which shows an application of Theorem~\ref{theo1}. In this figure we take \[ \sum_{y\in V(\Gamma)} \oplus C_y'=R_0\] with $C'_y=\mathcal{U}_\eta C_y\mathcal{U}_\eta^{-1}$, $(y\in \{u,v,w,x\})$, 
and \[ \sum_{e\in E(\Gamma)} \oplus \begin{bmatrix} 0& 1 \\ 1 & 0 \end{bmatrix}=R_1. \]
\begin{figure}[htbp]
\begin{center}
	\includegraphics[width=100mm]{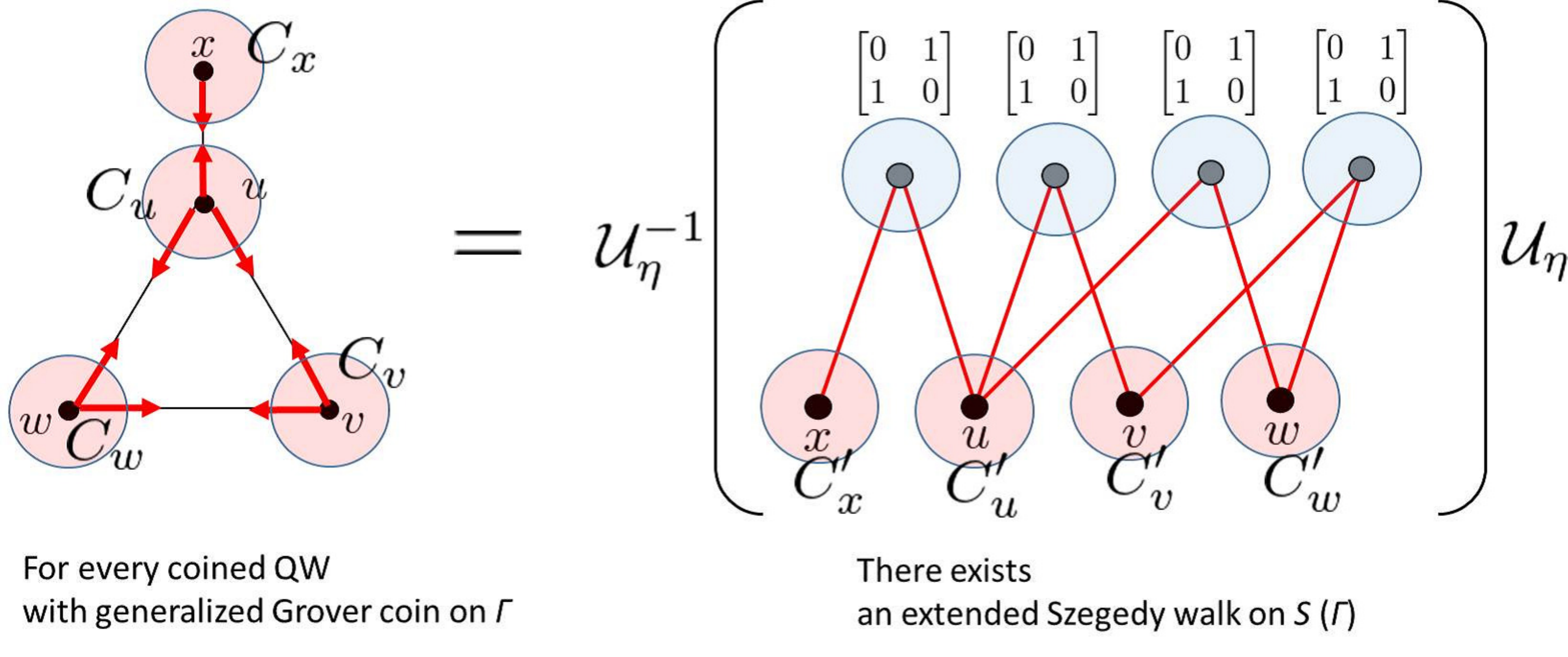}
\end{center}
\caption{An example showing an application of Theorem~\ref{theo1}. }
\label{fig:example}
\end{figure}

\section{Conclusions}\label{sec_conlusions}

We have described a method to convert discrete-time coined QWs that employ the generalized Grover coin on a graph $\Gamma$ into Szegedy's QWs on the subdivision graph of $\Gamma$. This method shows that the internal space of the coined model can be eliminated by including extra vertices into the graph. If the graph on which the coined model is defined is not the complete graph, then the dimension of the Hilbert space of the equivalent Szegedy's model is larger than the one employed by the coined model. This is consequence of the following fact: If Szegedy's model is defined on a bipartite graph with $m$ vertices in the first set and $n$ vertices in the second set, the Hilbert space of Szegedy's model is spanned by $mn$ vectors, but the dynamics takes place in the subspace spanned by the edges of the graph. The number of edges is smaller than $mn$ if the bipartite graph is not complete. Szegedy's model has an idle subspace when defined on non-complete bipartite graphs.

\section*{Acknowledgments}

RP acknowledges financial support from Faperj (grant n.~E-26/102.350/2013) and CNPq (grants n.~303406/2015-1, 4741\-43/2013-9).
ES was partially supported by JSPS Grant-in-Aid for Young Scientists (B) (No. 16K17637) and Japan-Korea Basic Scientific Cooperation Program ``Non-commutative Stochastic Analysis; New Aspects of Quantum White Noise and Quantum Walk'' (2015-2016). The authors thank Iwao Sato, Kaname Matsue, and Raqueline Santos for insightful discussions during the Workshop of Quantum Simulation and Quantum Walks at Yokohama.

\end{document}